\documentclass{amsart}
\usepackage{amsmath}
  \usepackage{paralist}
  \usepackage{graphics} 
  \usepackage{epsfig} 
 \usepackage[colorlinks=true]{hyperref}
\hypersetup{urlcolor=blue, citecolor=red}

  \textheight=8.2 true in
   \textwidth=5.0 true in
    \topmargin 30pt
     \setcounter{page}{1}

\newtheorem{theorem}{Theorem}[section]

\newtheorem{lemma}[theorem]{Lemma}

\theoremstyle{definition}



\newcommand{\Z}{{\mathbb Z}}
\newcommand{\R}{{\mathbb R}}
\newcommand{\C}{{\mathbb C}}
\def\calI{\mathcal I}
\def\calL{\mathcal L}

\def\calM{\mathcal M}
\def\O{\mathcal O}

\def\wt{\widetilde}

\def\ol{\overline}
\def\i{{\mathrm i}}
\def\om{{\omega}}
\def\th{{\theta}}
\def\vth{{\vartheta}}
\def\E{\mathbf{E}}
\def\boe{\mathbf{E}}

\def\ve{\varepsilon}
\def\tu{{\tt u}}
\def\cc{\mathrm{c.c.}}

\def\res{{\mathrm{res}}}

\title[Modulated waves in diatomic chains]
{Transport and generation of  macroscopically modulated waves 
in diatomic chains}

\author[Johannes Giannoulis]{}

\subjclass{Primary: 37K60, 34E13; Secondary: 70F45, 35L45.
}
\keywords{Diatomic chains, macroscopic modulations, resonance conditions.
}

\email{giannoul@ma.tum.de}

\begin{document}
\maketitle

\centerline{\scshape Johannes Giannoulis}
\medskip
{\footnotesize
 \centerline{Department of Mathematics, TU M\"unchen}
 \centerline{Boltz\-mann\-str.~3, D-85747 Garching, Germany}
}


\bigskip


\begin{abstract}
We derive and justify analytically the dynamics of a small macroscopically modulated amplitude of a single plane wave in a nonlinear diatomic chain with stabilizing on-site potentials
including the case where a wave generates another wave via self-interaction. 
More precisely, we show that in typical chains acoustical waves can generate 
optical but not acoustical waves, while optical waves are always closed with respect to self-interaction. 
\end{abstract}

\section{Introduction}
The present work constitutes a generalization of previous work of the author, see \cite{Gia10},
to a case of vector-valued displacement in nonlinear lattices. As the technically most simple but yet 
generic case we consider a nonlinear diatomic chain. For the physical derivation, interpretation and discussion of several applications of the harmonic diatomic chain
we refer to \cite{Bri53}. 
Various questions concerning diatomic lattices have been addressed up to now, see e.g.\ 
\cite{BG93, GEM97,PFR86,Tsu73,YS79}.
Here we focus on the analytical justification of the dynamics of small macroscopic amplitude modulations, see \eqref{appr}. 
More precisely,
we consider the diatomic chain
\begin{align}\label{dichain}
\begin{cases}
\ddot x_{2j+1}
&=V_1^\prime(x_{2j+2}-x_{2j+1})-V_1^\prime(x_{2j+1}-x_{2j})
-W_1^\prime(x_{2j+1}),
\\
\ddot x_{2j}
&=V_2^\prime(x_{2j+1}-x_{2j})-V_2^\prime(x_{2j}-x_{2j-1})
-W_2^\prime(x_{2j}),
\end{cases}
\quad j\in\Z,
\end{align}
with nearest-neighbor interaction and on-site potentials $V_i,\,W_i\in C^4(\R)$, $i=1,2$, 
such that
\begin{equation}\label{TEVW}
\begin{cases}
V_i^\prime(x) &=v_{i,1}x+v_{i,2}x^2+\widetilde V_i^\prime(x),
\quad \widetilde V_i^\prime (x)=O(|x|^3),
\\
W_i^\prime(x) &=w_{i,1}x+w_{i,2}x^2+\widetilde W_i^\prime(x),
\quad \widetilde W_i^\prime (x)=O(|x|^3).
\end{cases}
\end{equation}
Setting $u_j=\begin{pmatrix} u_{j,1} \\ u_{j,2} \end{pmatrix} 
:=\begin{pmatrix}x_{2j+1}\\x_{2j}\end{pmatrix}$, $j\in\Z$,  
and using the Taylor-expansions \eqref{TEVW},
the diatomic chain \eqref{dichain} takes the form 
\begin{align}\label{dichainS}
\ddot u = &\  \calL u+\calM(u),
\displaybreak[0]\\
(\calL u)_j :=&\ 
\begin{pmatrix}
v_{1,1} \big(u_{j+1,2}-2 u_{j,1}+u_{j,2}\big) -w_{1,1} u_{j,1}
\notag\\[1mm]
v_{2,1} \big(u_{j,1}-2 u_{j,2}-u_{j-1,1}\big) -w_{2,1} u_{j,2}
\end{pmatrix},
\displaybreak[0]\\
\notag
(\calM(u))_j :=&\ 
\begin{pmatrix}
v_{1,2} \big((u_{j+1,2}-u_{j,1})^2-(u_{j,1}-u_{j,2})^2\big)-w_{1,2}u_{j,1}^2
\\[1mm]
v_{2,2} \big((u_{j,1}-u_{j,2})^2-(u_{j,2}-u_{j-1,1})^2\big)-w_{2,2}u_{j,2}^2
\end{pmatrix}
+
\displaybreak[0]\\& 
+\begin{pmatrix}
\widetilde V_1^\prime (u_{j+1,2}-u_{j,1})-\widetilde V_1^\prime (u_{j,1}-u_{j,2})
-\widetilde W_1^\prime (u_{j,1})
\\[1mm]
\widetilde V_2^\prime(u_{j,1}-u_{j,2})-\widetilde V_2^\prime (u_{j,2}-u_{j-1,1})
-\widetilde W_2^\prime (u_{j,2})
\end{pmatrix}.
\notag
\end{align}
The linearized model $\ddot u = \calL u$ 
admits for non-trivial plane-wave solutions 
\begin{align*}
u= A  \E +\cc
,\quad \boe(t,j):=e^{\i(\om t +j\vth)}
,\quad A:=\begin{pmatrix} A^{(1)}\\ A^{(2)} \end{pmatrix}\in \C^2,
\end{align*}
provided the frequency $\om\in\R$ and the wave number $\vth\in(-\pi,\pi]$ satisfy the dispersion relation 
\begin{align}\label{H}
\det H(\om,\vth)=0
,\quad H(\om,\vth):=
\begin{pmatrix} 
\om^2{-}c_1 &  v_{1,1}(e^{\i\vth}{+}1) 
\\ 
v_{2,1}(1{+}e^{-\i\vth}) & \om^2{-}c_2
\end{pmatrix}
,
\end{align}
where $c_i:=2 v_{i,1}+w_{i,1}$.
This is equivalent to 
\begin{equation}\label{disprelvth}
\om^2=
\om_\pm^2(\vth):=\frac{c_1+c_2}2\pm\frac12\sqrt{(c_1-c_2)^2+ 8 v_{1,1} v_{2,1} (\cos\vth+1)}.
\end{equation}
Assuming $c_1+c_2>0$, $c_1c_2>4v_{1,1}v_{2,1}>0$,  
we obtain 
\begin{align}\label{freq}
\om_\pm(\vth)
:=
\sqrt{\frac12\left(c_1+c_2\pm\sqrt{(c_1-c_2)^2+8 v_{1,1} v_{2,1} (\cos\vth+1 )}\right)}
>0
\end{align}
for all $\vth\in (-\pi,\pi]$ and the additional assumption $c_1\neq c_2$ yields the strict separation of the optical and acoustical branches of the frequency, 
\begin{align*}
2\om_+^2(\vth)\ge c_1+c_2+|c_1-c_2|>c_1+c_2-|c_1-c_2|\ge 2\om_-^2(\vth)
\quad \forall\ \vth\in(-\pi,\pi].
\end{align*}
All of the above assumptions are satisfied in the case 
$w_{i,1}>0$, $4 v_{i,1}+w_{i,1}>0$, $v_{1,1}v_{2,1}>0$,  
$ 2 v_{2,1}+w_{2,1}>2 v_{1,1}+w_{1,1}$,
which we assume in the following.

The 
eigenvectors $A$ to the eigenfrequencies 
$\om=\om_\pm(\th)$ are given by
\begin{equation}\label{relations}
A^{(2)}=-\rho A^{(1)},
\qquad 
\rho:=\frac{\om^2{-}c_1}{v_{1,1}(e^{\i\vth}{+}1)}
=\frac{v_{2,1}(e^{-\i\vth}{+}1)}{\om^2{-}c_2}\neq0,
\quad\text{if $\vth\neq\pm\pi$}
\end{equation}
and 
\begin{equation}\label{vthpi}
A=\begin{pmatrix} A^{(1)} \\ 0 \end{pmatrix}
\quad\text{for $\om=\om_-(\pm\pi)$,}
\quad 
A=\begin{pmatrix} 0 \\ A^{(2)} \end{pmatrix}
\quad\text{for $\om=\om_+(\pm\pi)$.}
\end{equation}
The plan of the paper is as follows. In Section \ref{resonances} we discuss whether a given plane wave solution $\boe$ can generate via self-interaction another plane wave $\boe^2$. Then, taking into account also this possibility, in Section \ref{formalderivation} we derive formally the macroscopic equations for the first order amplitudes $A_{1,n}$ of two waves $n=1,2$, and finally, in Section \ref{justification}, we justify the derived equations.

\section{Resonances}\label{resonances}
Since we are interested in the self-interaction of a plane wave $\boe$, which means that  $\boe^2$ is also a plane wave, in a diatomic chain we are interested in resonance conditions like the ones on the left hand side below.  
Making in \eqref{freq} the substitutions 
$c:=(\cos\vth+1)/2\in[0,1]$, $d_1:=(c_1+c_2)^2/f>0$, $d_2:=(c_1-c_2)^2/f>0$ 
with $f:=16v_{1,1}v_{2,1}>0$ and $d_1-d_2>1$,
the problem of finding a $\vth\in(-\pi,\pi]$ satisfying one of these resonance
conditions is equivalent to
finding a $c\in[0,1]$ for given 
$d_1>d_2+1>1$ 
satisfying the corresponding equation on the
right hand side: 
\begin{align}
2\om_{(\pm)}(\vth)=\om_\pm(2\vth)
&\quad\Leftrightarrow\quad
4\left(\sqrt{d_1}\,(\pm)\,\sqrt{d_2+c}\right)=\sqrt{d_1}\pm\sqrt{d_2+(2c-1)^2}
\notag\\
&\quad\Leftrightarrow\quad
3\sqrt{d_1}=\,(\mp)\,4\sqrt{d_2+c}\pm\sqrt{d_2+(2c-1)^2}.
\label{rescond}
\end{align}
By the positivity of all appearing square roots we immediately see that a
resonance $2\om_+(\vth)=\om_-(2\vth)$, i.e., an optical wave generating an
acoustical one, is not possible. Moreover, since
\begin{align*}
3\sqrt{d_1}>\sqrt{d_1}>
\sqrt{d_2+1} 
>
-4\sqrt{d_2+c}+\sqrt{d_2+(2c-1)^2},
\end{align*}
we see 
that an optical wave can not generate another optical one, 
i.e., $2\om_+(\vth)\neq \om_+(2\vth)$ $\forall$ $\vth\in(-\pi,\pi]$.
Thus, 
{\it an optical wave is closed under self-interaction of order $2$.}

However, {\it an acoustical wave can generate an optical one by 
self-interaction}, i.e., for appropriate choice of the harmonic parts of the
interaction and on-site potentials there exist $\vth\in(-\pi,\pi]$ such that
$2\om_-(\vth)=\om_+(2\vth)$. After taking squares on the left and right hand 
sides, the corresponding condition \eqref{rescond} reads
\begin{align}\label{rescondopt}
9d_1=17d_2+16c+(2c-1)^2+8\sqrt{d_2+c}\sqrt{d_2+(2c-1)^2},
\end{align}
and we want to prove the existence of a $c\in[0,1]$ that 
satisfies this condition for the 
$d_1,d_2$ given above. 
We restrict ourselves 
to the case 
$v_{1,1}=a>0$, $v_{2,1}=\gamma a$, $\gamma>1$, $w_{1,1}=w_{2,1}=b>0$. 
This setting 
satisfies all conditions posed so far on the harmonic coefficients,
and we 
obtain
 \begin{align}\label{dvalues}
 & 
 d_1 
=\frac{(\gamma+1)^2}{4\gamma}
+\frac1\gamma\left((\gamma+1)\frac{b}{a}+\frac{b^2}{a^2}\right),
 \qquad
 d_2=\frac{(\gamma-1)^2}{4\gamma}=:\delta
 \end{align}
(which obviously satisfies $d_1>d_2+1>1$). 
Inserting these values into \eqref{rescondopt}, we get
\begin{align*}
\frac9\gamma\left((\gamma+1)\frac{b}{a}+\frac{b^2}{a^2}\right)
=
8\delta-9+16c+(2c-1)^2+8\sqrt{\delta+c}\sqrt{\delta+(2c-1)^2}.
\end{align*}
Hence, for every $c\in[0,1]$ such that 
$16c\ge 9-8\delta$ 
there exists a $\frac{b}a$ such that 
\eqref{rescondopt} is satisfied.
Since $\delta>0$, we can always find such a $c$. 

Furthermore, the resonance condition for the generation of an acoustical
wave from an acoustical one, $2\om_-(\vth)=\om_-(2\vth)$, is equivalent to 
\begin{align*}
3\sqrt{d_1}&=4\sqrt{d_2+c}-\sqrt{d_2+(2c-1)^2}
\end{align*}
Concerning the case just considered, we observe that for $d_2=\delta$, 
the r.h.s.\ is nonnegative only for $c\in[c_e,1]$ with 
$c_e:=\max\{0,\frac{5-\sqrt{15\delta+24}}2\}$ 
(and hence for all $c\in[0,1]$ when $\delta\ge 1/15$).
Restricting our analysis to the set $[c_e,1]$ (non-empty for all $\delta>0$),
we obtain by squaring and insertion of the values \eqref{dvalues} as above 
\begin{align*}
\frac9\gamma\left((\gamma+1)\frac{b}{a}+\frac{b^2}{a^2}\right)
=8\delta-9+16c+(2c-1)^2-8\sqrt{\delta+c}\sqrt{\delta+(2c-1)^2},
\end{align*}
although with a minus sign in front of the square root.
Due to the existent on-site potential (where $b>0$), 
in order to obtain resonances the r.h.s.\ $g$ needs to be strictly positive
for some $c\in[c_e,1]$. However, a careful analysis reveals that  
$g(c)\le 0$ for $c\in[c_e,1]$, 
and we obtain that in the case 
$v_{2,1}=\gamma a>a=v_{1,1}$, $w_{1,1}=w_{2,1}=b>0$, {\it an acoustical wave can not generate another acoustical one by self-interaction}. 

Finally, we conclude by showing that $\om_-(\vth)+\om_+(2\vth)\neq \om_+(3\vth)$ 
for all $\vth\in(-\pi,\pi]$. Indeed, after squaring the left and right hand sides we see that the equality is equivalent to  
\begin{align*}
&
-\sqrt{d_2+c}
+\sqrt{d_2+(2c-1)^2}
+2\sqrt{\big(\sqrt{d_1}-\sqrt{d_2+c}\big)\big(\sqrt{d_1}+\sqrt{d_2+(2c-1)^2}\big)}
\\&\quad
=\sqrt{d_2+(4 c-3)^2 c}-\sqrt{d_1}
\end{align*}
for $d_1>d_2+1>1$. Since $(4 c-3)^2 c=(\cos(3\vth)+1)/2\in[0,1]$, the r.h.s.\ of this equation is always $<0$, and 
it suffices to show that the l.h.s.\ is $\ge 0$ even for $c\ge (2c-1)^2$.
Hence, since $d_1>d_2+1$, it is sufficient to show that the l.h.s.\ with $\sqrt{d_1}$ replaced by $\sqrt{d_2+1}$ is $\ge 0$ for $c\in[1/4,1]$. 
Comparing in this modified l.h.s.\ the square of the first two terms with the square of the third one, and adding a suitable term, this is equivalent to showing that 
\begin{align*}
4 (1-c)\ge 
&\big(\sqrt{d_2+c}-\sqrt{d_2+(2 c-1)^2}\big)
\\
&\Big(\big(\sqrt{d_2+c}-\sqrt{d_2+(2 c-1)^2}\big)+4 \big(\sqrt{d_2+1}-\sqrt{d_2+c}\big)\Big)
\end{align*}
for $c\in[1/4,1]$. 
Since the r.h.s.\ is positive and strictly decreasing as a function of $d_2>0$ 
when $c>(2c-1)^2$, it suffices to show 
\begin{align*}
g(c):=4 (1-c)-
\big(\sqrt{c}-\sqrt{(2 c-1)^2}\big)
\Big(\big(\sqrt{c}-\sqrt{(2 c-1)^2}\big)+4 (1-\sqrt{c})\Big)
\ge0
\end{align*}
for $c\in[1/4,1]$, which holds true (as an elementary analysis shows),
with $g(1)=0$. 

\section{Formal derivation}\label{formalderivation}
We are interested in solutions of \eqref{dichainS} which in first order in $\ve$ are a sum of two macroscopically modulated plane-wave solutions with small amplitudes
\begin{equation}\label{appr}
u=U^{A,1}_\ve+O(\ve^2)
,\quad (U^{A,1}_\ve)_{j}(t):=\ve \sum_{n=1}^2 A_{1,n}(\ve t,\ve j)\boe_n(t,j)+\cc
\end{equation}
where $A_{1,n}=\big(A_{1,n}^{(1)}, A_{1,n}^{(2)}\big)^T :\R\times\R\to\C^2$ and 
$\E_n(t,j):=e^{\i(\om_n t+j\vth_n)}$ with $(\om_n,\vth_n)$ satisfying \eqref{H}.

However, due to the scaling of $A_{1,n}$ by $\ve$ and the macroscopic nature of its time and space variables, its dynamics will include terms of second order in $\ve$. 
Hence, taking into account the nonlinearity of our original system \eqref{dichainS} and the fact that we consider two different plane waves, we insert into \eqref{dichainS} the 
\emph{improved approximation}
\begin{align}
\label{imprappr}
U^{A,2}_\ve:= U^{A,1}_\ve 
+\ve^2 \Big(
&\sum_{n=1}^2 \big(A_{2,n}\boe_n+ A_{2,(n,n)}\boe_n^2\big)
+A_{2,(1,2)}\boe_1\boe_2
\\&
+A_{2,(1,-2)}\boe_1\boe_{-2}
+\frac12 A_{2,(1,-1)}+\cc\Big),
\notag
\end{align}
where $A_{2,\iota}=\big(A_{2,\iota}^{(1)},A_{2,\iota}^{(2)}\big)^T:\R\times\R\to\C^2$,
$\iota\in\{1,2\}\cup I$, $I:=\{(1,1),\, (2,2),\, (1,2),$ 
$(1,-2),\, (1,-1)\}$,
are again functions of the macroscopic variables $\tau=\ve t$, $y=\ve j$,
and where $\E_{-n}=\ol\E_n$.
Thereby, we use 
the Taylor expansions 
\begin{align*}
&A_{1,n}^{(i)}(\cdot,\cdot\pm\ve)
=A_{1,n}^{(i)}\pm\ve\partial_y A_{1,n}^{(i)}
+\ve^2\frac12\partial_y^2 A_{1,n,\xi_1+}^{(i)},\quad
\partial_y^2 A_{1,n,\xi_1\pm}^{(i)}:=\partial_y^2 A_{1,n}^{(i)}(\tau,y\pm\xi_1\ve),
\\
&A_{2,\iota}^{(i)}(\cdot,\cdot\pm\ve)
=A_{2,\iota}^{(i)}\pm\ve\partial_y A_{2,\iota,\xi_2\pm}^{(i)},
\quad
\partial_y A_{2,\iota,\xi_2\pm}^{(i)}:=\partial_y A_{2,\iota}^{(i)}(\tau,y\pm\xi_2\ve)
\end{align*}
with $\xi_1,\xi_2\in(0,1)$, assuming $A_{1,n}(\tau,\cdot)\in C^2(\R;\C^2)$, 
$A_{2,\iota}(\tau,\cdot)\in C^1(\R;\C^2)$.

Carrying out the usual (lengthy but straightforward) formal expansion 
in terms of $\ve$ and $\E_n$, we obtain that 
$ \ddot U^{A,2}_\ve= \calL U^{A,2}_\ve+\calM\big(U^{A,2}_\ve\big)$
is equivalent to
\begin{align*}
&\ve\Big\{
\sum_{n=1}^2 H(\om_n,\vth_n)
A_{1,n}\boe_n+\cc\Big\} +
\displaybreak[0]
\\&
+\ve^2\Big\{
\sum_{n=1}^2 \Big(
\begin{pmatrix}
-2\i\om_n \partial_\tau A^{(1)}_{1,n} +v_{1,1}e^{\i\vth_n}\partial_y A_{1,n}^{(2)}
\\
-2\i\om_n \partial_\tau A^{(2)}_{1,n} - v_{2,1}e^{-\i\vth_n} \partial_y A_{1,n}^{(1)}
\end{pmatrix}
+H(\om_n,\vth_n)
A_{2,n}
\Big)\boe_n
\displaybreak[0]\\&\phantom{+\ve^2\Big\{\  }
+\sum_{n=1}^2 \Big( 
H(2\om_n,2\vth_n)
A_{2,(n,n)}
+K_{(n,n)}
\Big)\boe_n^2
\displaybreak[0]\\&\phantom{+\ve^2\Big\{\  }
+\Big(H(\om_1{+}\om_2,\vth_1{+}\vth_2)
A_{2,(1,2)}
+K_{(1,2)}
\Big)\boe_1\boe_2
\displaybreak[0]\\&\phantom{+\ve^2\Big\{\  }
+\Big(
H(\om_1{-}\om_2,\vth_1{-}\vth_2)
A_{2,(1,-2)}
+K_{(1,-2)}
\Big)\boe_1\boe_{-2}
\displaybreak[0]\\&\phantom{+\ve^2\Big\{\  }
+\frac12
H(0,0)
A_{2,(1,-1)}
+K_{(1,-1)}
+\cc\Big\}
+\res\big(U^{A,2}_\ve\big)
=0
\end{align*}
with the explicit expressions for $K_{\iota}$, $\iota\in I$, 
and $\res\big(U^{A,2}_\ve\big)=\O(\ve^3)$ given in the Appendix.
Hence, in order for our ansatz \eqref{imprappr} to satisfy \eqref{dichainS} up to order $\ve$,
taking into account that $\boe_1\neq\boe_2$, the systems $H(\om_n,\vth_n)A_{1,n}=0$ have to be satisfied. As we have already seen, since $\det H(\om_n,\vth_n)=0$, this gives the relation between first and second component of $A_{1,n}$
\eqref{relations}, \eqref{vthpi} with $A,\rho,\om,\vth$ replaced by $A_{1,n},\rho_n,\om_n,\vth_n$.

Next, we assume that 
\begin{equation}\label{regularity}
\det H(\om,\vth)\neq 0\quad\text{for}\quad  
(\om,\vth)=(2\om_n,2\vth_n),\, (\om_1\pm\om_2,\vth_1\pm\vth_2),
\end{equation}
which means in particular that $\boe_n^2,\, \boe_1\boe_2,\, \boe_1\boe_{-2}\neq \boe_1,\boe_2$.
(Note here that $\det H(0,0)\neq0$ is always satisfied due to our stability assumption $c_1c_2>4v_{1,1}v_{2,1}$.)
In this case and for $\vth_n\neq\pm\pi$, 
we obtain from the equations for $\ve^2\boe_n$
\begin{align}\label{relations2}
\rho_n A_{2,n}^{(1)} +  A_{2,n}^{(2)}
&=\frac1{v_{1,1}(e^{\i\vth_n}{+}1)}
\big(2\i\om_n \partial_\tau A^{(1)}_{1,n}-v_{1,1}e^{\i\vth_n}\partial_y A_{1,n}^{(2)}\big)
\\&
=\frac1{\om_n^2{-}c_2}
\big( 2\i\om_n \partial_\tau A^{(2)}_{1,n} +v_{2,1}e^{-\i\vth_n}\partial_y A_{1,n}^{(1)}\big).
\notag
\end{align}
Inserting  
$A_{1,n}^{(2)}=-\rho_n A_{1,n}^{(1)}$,
and noting that \eqref{disprelvth} gives 
\begin{align*}
\om_\pm^\prime(\vth)
=
 \frac{- v_{1,1} v_{2,1} \sin\vth}{\om_\pm(\vth)\big(2\om_\pm^2(\vth)-c_1-c_2\big)},
\end{align*}
we obtain from the equality of the right hand sides of \eqref{relations2}
\begin{equation}\label{noninterA}
\partial_\tau A_{1,n}^{(1)}-\om_\pm^\prime(\vth_n)\partial_y A_{1,n}^{(1)}=0
\quad\text{for}\quad 
\om_n = \om_\pm (\vth_n).
\end{equation}
Analogously, in the case $\vth_n=\pm\pi$ we get from \eqref{vthpi} (for $A=A_{1,n}$)
\begin{align}\label{36}
&
\partial_\tau A^{(1)}_{1,n}=0,
\quad
A^{(2)}_{2,n}=\frac{v_{2,1}}{c_2{-}c_1} \partial_y A_{1,n}^{(1)}
\quad\text{for $\om_n^2=\om_-^2(\pm\pi)=c_1$,}
\\
&\label{36a}
\partial_\tau A^{(2)}_{1,n}=0,
\quad
A_{2,n}^{(1)}=\frac{v_{1,1}}{c_2{-}c_1}\partial_y A_{1,n}^{(2)}
\quad\text{for $\om_n^2=\om_+^2(\pm\pi)=c_2$.}
\end{align}
Thus, we conclude that if the non-resonance conditions \eqref{regularity} hold, which means  
in particular that neither wave generates a new one via self-interaction,
the dynamics of the amplitudes $A_{1,n}$ are given by uncoupled transport equations where the velocity is the group velocity of the corresponding carrier wave. Hence, setting in particular $A_{1,2}(0,\cdot)=0$ we obtain that the dynamics of $A_{1,1}$ are given, unsurprisingly, by a homogeneous transport equation. Moreover, since the $K_\iota$, $\iota\in I$, are known, as they depend only on the first order amplitudes $A_{1,n}$ (see Appendix), and since 
\eqref{relations2}, \eqref{36}$_2$, \eqref{36a}$_2$ determine the relation between the components of $A_{2,n}$, we obtain by \eqref{regularity} all $A_{2,\iota}$ except for one component of $A_{2,n}$, which can be assumed to be equivalently vanishing.  

However, it is possible that
$(\om_2,\vth_2)=(2\om_1,2\vth_1)$, 
i.e.\ $\boe_2=\boe_1^2$, 
namely for $\om_1=\om_-(\vth_1)$, $\om_2=\om_+(\vth_2)$,
which moreover implies that $\boe_1^3=\boe_1\boe_2$, $\boe_1^4=\boe_2^2$ 
do not characterize plane waves, as we have shown in Section \ref{resonances}.
In this case the formal expansion gives
\begin{align*}
&\ve\Big\{
\sum_{n=1}^2 H(\om_n,\vth_n) A_{1,n}\boe_n+\cc\Big\}
\displaybreak[0]\\&
+\ve^2\Big\{
\Big(
\begin{pmatrix}
-2\i\om_1 \partial_\tau A^{(1)}_{1,1} +v_{1,1}e^{\i\vth_1}\partial_y A_{1,1}^{(2)}
\\
-2\i\om_1 \partial_\tau A^{(2)}_{1,1} - v_{2,1}e^{-\i\vth_1} \partial_y A_{1,1}^{(1)}
\end{pmatrix}
+ H(\om_1,\vth_1) A_{2,1} +\bar K_{(1,-2)} \Big)\boe_1
\displaybreak[0]\\&\phantom{+\ve^2\Big\{\ }
+\Big(
\begin{pmatrix}
-2\i\om_2 \partial_\tau A^{(1)}_{1,2} +v_{1,1}e^{\i\vth_2}\partial_y A_{1,2}^{(2)}
\\
-2\i\om_2 \partial_\tau A^{(2)}_{1,2} - v_{2,1}e^{-\i\vth_2} \partial_y A_{1,2}^{(1)}
\end{pmatrix}
+ H(\om_2,\vth_2) A_{2,2} +K_{(1,1)} \Big)\boe_2
\displaybreak[0]\\&\phantom{+\ve^2\Big\{\ }
+\big( H(2\om_2,2\vth_2) A_{2,(2,2)} +K_{(2,2)} \big)\boe_1^4
+\big(H(\om_1{+}\om_2,\vth_1{+}\vth_2) A_{2,(1,2)} +K_{(1,2)} \big)\boe_1^3
\displaybreak[0]\\&\phantom{+\ve^2\Big\{\  }
+\frac12 H(0,0) A_{2,(1,-1)} +K_{(1,-1)} +\cc\Big\}
+\res\big(U^{A,2}_\ve\big)
=0
\end{align*}
The equations for $\ve\boe_n$ are the same as before, and hence 
\eqref{relations} and \eqref{vthpi} (with $A_{1,n}$, $\rho_n$, $\om_n$, $\vth_n$) 
are still valid. 
Then, using $\rho_n$, 
we obtain from the equations for $\ve^2\boe_n$ in the case $\vth_n\neq\pm\pi$
\begin{align*}
\rho_1 A_{2,1}^{(1)}+ A_{2,1}^{(2)}
&=\frac1{ v_{1,1}(e^{\i\vth_1}{+}1)}
\Big( 2\i\om_1 \partial_\tau A^{(1)}_{1,1} - v_{1,1}e^{\i\vth_1}\partial_y A_{1,1}^{(2)}
+\bar K_{(1,-2)}^{(1)}\Big)
\displaybreak[0]\\
&=\frac1{\om_1^2{-}c_2}
\Big(2\i\om_1 \partial_\tau A^{(2)}_{1,1} + v_{2,1}e^{-\i\vth_1} \partial_y A_{1,1}^{(1)}
+\bar K_{(1,-2)}^{(2)}\Big),
\displaybreak[0]\\
\rho_2 A_{2,2}^{(1)} +  A_{2,2}^{(2)} 
&=\frac1{v_{1,1}(e^{\i\vth_2}{+}1)}
\Big(2\i\om_2 \partial_\tau A^{(1)}_{1,2} - v_{1,1}e^{\i\vth_2}\partial_y A_{1,2}^{(2)}
+K_{(1,1)}^{(1)}\Big)
\displaybreak[0]\\
&=\frac1{ \om_2^2{-}c_2}
\Big( 2\i\om_2 \partial_\tau A^{(2)}_{1,2} + v_{2,1}e^{-\i\vth_2} \partial_y A_{1,2}^{(1)}
+K_{(1,1)}^{(2)}\Big),
\end{align*}
and inserting \eqref{relations} into the equalities on the right hand side  we get
for 
$\vth_1\neq\pm\frac\pi2,\pm\pi$,
$\vth_2=2\vth_1$, $\om_1=\om_-(\vth_1)$, $\om_2=\om_+(\vth_2)=2\om_1$
\begin{equation}\label{interA}
\begin{cases}
\partial_\tau A^{(1)}_{1,1}-\om_-^\prime(\vth_1)\partial_y A_{1,1}^{(1)}
&=\frac{d_1}{ \i \om_1 }\frac{\om_1^2{-}c_2}{(\om_1^2{-}c_1)+(\om_1^2{-}c_2)}
\bar A^{(1)}_{1,1} A^{(1)}_{1,2},
\\
\partial_\tau A^{(1)}_{1,2}-\om_+^\prime(\vth_2)\partial_y A_{1,2}^{(1)}
&=\frac{d_2}{2\i\om_2 }\frac{\om_2^2{-}c_2}{(\om_2^2{-}c_1)+(\om_2^2{-}c_2)}
\big(A_{1,1}^{(1)}\big)^2
\end{cases}
\end{equation}
with 
\begin{align*}
d_1:=&\ 
d v_{2,2} \Big(\frac{e^{-\i\vth_1}{-}1}{\bar\rho_1\rho_2} 
+\frac{e^{-\i2\vth_1}{-}1}{\rho_2} +\frac{e^{\i\vth_1}{-}1}{\bar\rho_1} \Big) 
\\&\ 
+v_{1,2} \Big(\bar\rho_1\rho_2(e^{\i\vth_1}{-}1)
+\rho_2(e^{\i2\vth_1}{-}1)+\bar\rho_1(e^{-\i\vth_1}{-}1)\Big)
+d w_{2,2} 
- w_{1,2},
\\
d_2:=&\ 
d v_{2,2} \Big(\frac{e^{-\i2\vth_1}{-}1}{\rho_1^2} +2 \frac{e^{-\i\vth_1}{-}1}{\rho_1}\Big) 
+v_{1,2} \Big(\rho_1^2(e^{\i 2\vth_1}{-}1)+2\rho_1(e^{\i\vth_1}{-}1)\Big) 
\\&
+d w_{2,2}  
-w_{1,2}, 
\\
d:=&\ \frac{v_{1,1}v_{2,1}^2(2+4\cos\vth_1+2\cos\vth_2)}{(\om_1^2{-}c_2)^2(\om_2^2{-}c_2)}.
\end{align*}

Analogously, for $\vth_1=\pm\frac\pi2$, $\vth_2=2\vth_1$, $\om_1=\om_-(\vth_1)$, 
$\om_2=\om_+(\vth_2)=\sqrt{c_2}=2\om_1$ we obtain 
\begin{align}\label{38}
\begin{cases}
\partial_\tau A^{(1)}_{1,1} 
-\om_-^\prime(\pm{\textstyle{\frac\pi2}}) 
\partial_y A_{1,1}^{(1)}
&=\frac{d_1}{\i\om_1}
\frac{\om_1^2{-}c_2}{\om_1^2-c_2 +\om_1^2-c_1}
\bar A_{1,1}^{(1)} A_{1,2}^{(2)}
\\
\partial_\tau A^{(2)}_{1,2}
&=\frac1{2\i\om_2}\Big(2 v_{2,2}\big(1+\rho_1(1{\pm}\i)\big)-w_{2,2}\rho_1^2\Big) 
\big(A^{(1)}_{1,1}\big)^2 
\end{cases}
\end{align}
with $d_1=\Big( \frac{v_{2,2}}{v_{2,1}}\mp\i\frac{w_{2,2}}{\om_1^2-c_2}\Big) 
(\om_1^2-c_1)+v_{1,2}\big(\rho_1(1{\pm}\i)+2\big)$ and
\begin{align*}
\rho_1 A_{2,1}^{(1)} + A_{2,1}^{(2)}
&=
\rho_1\Big(\frac{1\mp\i}2\partial_y A_{1,1}^{(1)}
-\frac{2\i\om_1}{\om_1^2{-}c_2}\partial_\tau A^{(1)}_{1,1}\Big) 
+2 \rho_1 \Big( \frac{v_{2,2}}{v_{2,1}}\mp\i \frac{w_{2,2}}{\om_1^2{-}c_2}\Big) 
\bar A^{(1)}_{1,1} A^{(2)}_{1,2} 
\\
A_{2,2}^{(1)}
&=\frac1{c_2{-}c_1}\Big(
v_{1,1} \partial_y A_{1,2}^{(2)}
+\Big(2 v_{1,2}\big(\rho_1^2+\rho_1(1{\mp}\i)\big)+w_{1,2}\Big) 
\big(A^{(1)}_{1,1}\big)^2
\Big)
\end{align*}
and for $\vth_1=\pm\pi$, $\vth_2=2\vth_1$, $\om_1=\om_-(\vth_1)=\sqrt{c_1}$,
$\om_2=\om_+(\vth_2)=\om_+(0)=2\om_1$ we get, 
using \eqref{vthpi} and \eqref{relations},
the equations \eqref{interA} with $\om_-^\prime(\vth_1)=\om_+^\prime(\vth_2)=0$
and $d_1=d_2=-w_{1,2}$,
and 
\begin{align*}
A_{2,1}^{(2)}
&=\frac1{c_2-c_1}\Big(v_{2,1} \partial_y A_{1,1}^{(1)}+4 v_{2,2}  (1+\rho_2) \bar A_{1,1}^{(1)}  A_{1,2}^{(1)}\Big), 
\displaybreak[0]\\
\rho_2 A_{2,2}^{(1)}+A_{2,2}^{(2)}
&=
\rho_2\Big(\frac12\partial_y A_{1,2}^{(1)}
-\frac{2\i\om_2}{\om_2^2{-}c_2}\partial_\tau A^{(1)}_{1,2}\Big). 
\end{align*}

Hence, in the case $\boe_2=\boe_1^2$ we obtain two coupled equations for $A_{1,n}^{(1[2, \text{resp.}])}$, 
the solutions of which (cf.\ about their well-posedness Lemma \ref{lemmares}) determine again 
$U_\ve^{A,2}$ up to one component of $A_{2,n}$.
We would like to stress that in order to obtain non-trivial dynamics for $A_{1,1}$ we have to consider also the dynamics of the generated wave $A_{1,2}$ while if only interested in $A_{1,2}$ we could ignore the generating wave $A_{1,1}$, 
see e.g.\ \eqref{interA}. Note in this context, that even for initial data $A_{1,2}(0,\cdot)=0$ an amplitude $A_{1,2}\not\equiv0$ emerges, which motivates the notion of \emph{generation}
of waves. 

\section{Justification}\label{justification}
The equations obtained by the formal derivation constitute only necessary conditions on the amplitudes $A_{1,n}$ of the \emph{ansatz} \eqref{appr}. The purpose of the justification is to show that indeed solutions $u$ of such a form exist.  
\begin{theorem}\label{justiftheorem}
Let $V_i,W_i\in C^{4}(\R)$, $i=1,2$,  in \eqref{TEVW} satisfy
\begin{equation}\label{stabilityassumption}
v_{1,1}=\frac{v_1}M,\ v_{2,1}=\frac{v_1}m,\ 
w_{i,1}>0,\ 
4 v_1+\min\{M w_{1,1}, m w_{2,1}\}>0,\
M,m>0,
\end{equation}
let  
$\om_2>\om_1>0$ and $\vth_n\in(-\pi,\pi]$, $n=1,2$, 
satisfy $\det H(\om_n,\vth_n)=0$ and 
\begin{align*}
\text{either}\quad & 
\det H(2\om_n,2\vth_n)\neq 0,\quad 
\det H(\om_1\pm\om_2,\vth_1\pm\vth_2)\neq 0,
\\
\text{or}\quad &
(\om_2,\vth_2)=(2\om_1,2\vth_1\ \mathrm{mod}\ 2\pi),\quad
\det H(k\om_1, k\vth_1)\neq 0,\ k=3,4, 
\end{align*}
with the dispersion matrix $H$ in \eqref{H},
and let $A_{1,n}^{(1[2])}:[0,\tau_0]\times\R\to\C$, $\tau_0>0$, 
be, respectively, the unique solutions of either \eqref{noninterA}(or \eqref{36} or \eqref{36a}) 
or \eqref{interA} (or \eqref{38})
with  $A_{1,n}^{(1[2])}(0,\cdot)\in H^4(\R;\C)$.

Then, for the corresponding approximation $U^{A,1}_\ve$ and every $c>0$, $\beta\in\big(1,3/2]$ 
there exist $\ve_0,C>0$ 
such that for all $\ve\in(0,\ve_0)$ and all $t\in[0,\tau_0/\ve]$ 
any solution $u$ of \eqref{dichainS} satisfies
\begin{align*}
\bigg\|\begin{pmatrix} u - U^{A,1}_\ve\\ \dot u - \dot U^{A,1}_\ve \end{pmatrix}(0)
\bigg\|_{(\ell^2)^4}
\le c\ve^\beta
\quad\Rightarrow\quad 
\bigg\|\begin{pmatrix} u - U^{A,1}_\ve \\ \dot u - \dot U^{A,1}_\ve \end{pmatrix}(t)
\bigg\|_{(\ell^2)^4}
\le C\ve^\beta.
\end{align*}
\end{theorem}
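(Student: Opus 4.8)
The plan is to compare the true solution not with $U^{A,1}_\ve$ directly but with the improved approximation $U^{A,2}_\ve$ of \eqref{imprappr}, which was constructed precisely so that the residual $\res\big(U^{A,2}_\ve\big)=\ddot U^{A,2}_\ve-\calL U^{A,2}_\ve-\calM\big(U^{A,2}_\ve\big)$ is formally of order $\ve^3$. Since the two approximations differ only at order $\ve^2$ in amplitude, and since a modulated term of the form $\ve^k B(\ve j)\boe$ has $(\ell^2)$-norm of order $\ve^{k-1/2}$ (the amplitude size times $\ve^{-1/2}$ from the discrete-to-continuous rescaling in $j$), we have $\big\|U^{A,2}_\ve-U^{A,1}_\ve\big\|_{(\ell^2)^4}=\O(\ve^{3/2})$; as $\beta\le 3/2$ this difference is absorbed into the target bound. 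Hence by the triangle inequality it suffices to prove the analogous estimate with $U^{A,1}_\ve$ replaced by $U^{A,2}_\ve$, the hypothesis on the initial error transferring to $U^{A,2}_\ve$ as well.

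I would first fix the functional-analytic framework. Writing the system in first-order form for $z:=(u,\dot u)^T$ and exploiting \eqref{stabilityassumption}, the operator $\calL$ is self-adjoint in the mass-weighted inner product with weights $M,m$, since $Mv_{1,1}=mv_{2,1}=v_1$ symmetrizes its Bloch symbol; its eigenvalues are $-\om_\pm^2(\vth)$, which by the strict separation of branches are bounded away from $0$ from above and below. Consequently, with $D:=\mathrm{diag}(M,m)$ and $K:=-D\calL>0$, the quadratic energy $E(z):=\tfrac12\langle D\dot u,\dot u\rangle+\tfrac12\langle Ku,u\rangle$ is equivalent to $\|z\|_{(\ell^2)^4}^2$, and the linear flow conserves $E$. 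This is what permits a clean energy estimate over the long time interval.

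Next I would derive the error equation. Setting $R:=u-U^{A,2}_\ve$ one obtains $\ddot R=\calL R+\big(\calM(u)-\calM(U^{A,2}_\ve)\big)-\res\big(U^{A,2}_\ve\big)$, and differentiating the energy of the pair $(R,\dot R)$ the skew terms cancel by self-adjointness, leaving $\tfrac{d}{dt}E=\langle D\dot R,\,(\calM(u)-\calM(U^{A,2}_\ve))-\res\rangle$. The crucial step is the nonlinear estimate: since $\calM$ consists of shifts, pointwise quadratic terms, and the $C^1$ cubic remainders $\wt V_i^\prime,\wt W_i^\prime=O(|\cdot|^3)$, using the embedding $\|R\|_{\ell^\infty}\le\|R\|_{\ell^2}$ together with $\big\|U^{A,2}_\ve\big\|_{(\ell^\infty)^2}=\O(\ve)$ and splitting $\calM(u)-\calM(U^{A,2}_\ve)$ into its part linear in $R$ (with an $\O(\ve)$ background coefficient) and its genuinely quadratic-and-higher part yields $\big\|\calM(u)-\calM(U^{A,2}_\ve)\big\|_{(\ell^2)^2}\le C\ve\|R\|_{(\ell^2)^2}+C\|R\|_{(\ell^2)^2}^2$. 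Combined with $\|\res\|_{(\ell^2)^4}=\O(\ve^{5/2})$ (the order-$\ve^3$ amplitude times $\ve^{-1/2}$, bounding the Taylor-remainder factors $\partial_y^2A_{1,n,\xi_1\pm}$, $\partial_yA_{2,\iota,\xi_2\pm}$ via $A_{1,n}^{(1[2])}(0,\cdot)\in H^4$ and the well-posedness of the amplitude equations from Lemma \ref{lemmares}), and writing $r(t):=\|(R,\dot R)(t)\|_{(\ell^2)^4}\simeq E^{1/2}$, the energy identity gives
\[
\dot r\le C_1\ve\,r+C_2\,r^2+C_3\,\ve^{5/2}.
\]

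Finally I would close the estimate by a continuity/bootstrap argument on $[0,\tau_0/\ve]$. Assuming $r\le K\ve^\beta$ on a maximal subinterval and integrating, the linear term $C_1\ve r$ contributes only the bounded factor $e^{C_1\tau_0}$, the residual term integrates to $C_3\tau_0\,\ve^{3/2}$, and the nonlinear term integrates to $C_2K^2\tau_0\,\ve^{2\beta-1}$; factoring out $\ve^\beta$, these carry $1$, $\ve^{3/2-\beta}$ (bounded since $\beta\le 3/2$), and $\ve^{\beta-1}$ (small since $\beta>1$), so for $K$ chosen large relative to the data and residual constants and then $\ve_0$ small one recovers $r\le\tfrac12K\ve^\beta$, forcing the subinterval to be all of $[0,\tau_0/\ve]$. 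I expect the main obstacle to be exactly this nonlinear term: over the macroscopically long time $\tau_0/\ve$ the quadratic error accumulates as $\ve^{2\beta-1}$, and it is precisely the restriction $\beta>1$ that makes $2\beta-1>\beta$ and prevents blow-up, while the upper bound $\beta\le 3/2$ is forced by the sizes of the residual and of $U^{A,2}_\ve-U^{A,1}_\ve$. The remaining technical burden is the rigorous verification of the residual order, i.e.\ the uniform-in-$\ve$ $(\ell^2)$-control of all Taylor-remainder contributions in \eqref{imprappr}.
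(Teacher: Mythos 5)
Your proposal is correct and follows essentially the same route as the paper: comparison with the improved approximation $U^{A,2}_\ve$ (whose sup-norm size, distance to $U^{A,1}_\ve$, and residual bound are exactly the content of Lemma \ref{lemmares}), an energy norm equivalent to $\|\cdot\|_{(\ell^2)^4}$ in which the linearized flow is norm-preserving, and a Gronwall-plus-continuation argument on $[0,\tau_0/\ve]$ in which $\beta>1$ controls the quadratic error accumulation and $\beta\le 3/2$ is dictated by the residual and by $\big\|\wt U^{A,2}_\ve-\wt U^{A,1}_\ve\big\|$. The only cosmetic difference is that you differentiate the energy and keep an explicit $r^2$ term, whereas the paper works with the Duhamel/integral formulation and absorbs the quadratic contribution into the Lipschitz estimate \eqref{estNgeneral} through the $\ell^\infty$ bootstrap bound; the two bookkeeping schemes are equivalent.
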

\begin{proof}
The idea of the proof is classical, see e.g.\ \cite{KSM9?}.
We write the microscopic model \eqref{dichainS}
as a first order system in  
$Y:=\big(\ell^2\big)^4$
with $(\ell^2)^4=(\ell^2)^2\times(\ell^2)^2=(\ell^2\times\ell^2)\times(\ell^2\times\ell^2)$
and $\ell^2=\ell^2(\Z)$,
\begin{equation}\label{dichainStilde}
\dot{\wt u}=\wt \calL\wt u+\wt \calM(\wt u)
\quad\text{with}\quad
\wt u:=\begin{pmatrix} u \\ \dot u \end{pmatrix},\quad
\wt \calL:=\begin{pmatrix}0 & \calI \\ \calL & 0\end{pmatrix},\quad
\wt \calM(\wt u):=\begin{pmatrix} 0 \\ \calM(u) \end{pmatrix},
\end{equation}
where $\calI:\big(\ell^2\big)^2\to\big(\ell^2\big)^2$ is the identity.
Then, the flow of the linearized system $\dot{\wt u}=\wt\calL\wt u$ 
preserves the energy norm on $Y$, 
\begin{align*}
\|\check u\|_Y^2:=&\|u\|_E^2+\|\tu\|_M^2
\\
=&\sum_{j\in\Z} \Big(v_1\big(|u_{j+1,2}-u_{j,1}|^2+|u_{j,1}-u_{j,2}|^2\big)
+ M w_{1,1} |u_{j,1}|^2+ m w_{2,1}|u_{j,2}|^2\Big)
\\&
+\sum_{j\in\Z}\big( M |\tu_{j,1}|^2 + m |\tu_{j,2}|^2\big) 
\quad\text{for}\ \check u=\begin{pmatrix} u \\ \tu \end{pmatrix}, 
\end{align*}
i.e.\ its associated semi-group $e^{t\wt\calL}$   
satisfies $\| e^{t\wt \calL}\|_{Y\to Y}=1$,
and from \eqref{stabilityassumption} 
it follows by Fourier transformation that the norms 
$\|{\cdot}\|_E$, $\|{\cdot}\|_M$ and $\|{\cdot}\|_{(\ell^2)^2}$,
and hence also $\|{\cdot}\|_Y$ and $\|{\cdot}\|_{(\ell^2)^4}$, 
are equivalent: 
$\hat\kappa_1\|u\|_{(\ell^2)^2}\le\|u\|_M \le\hat\kappa_2\|u\|_{(\ell^2)^2}$, 
$\kappa_1\|u\|_{(\ell^2)^2}\le\|u\|_E\le\kappa_2\|u\|_{(\ell^2)^2}$,  
and  $\check\kappa_1\|\check u\|_{(\ell^2)^4}\le\|\check u\|_Y
\le\check\kappa_2\|\check u\|_{(\ell^2)^4}$,  
with $\hat\kappa_i, \kappa_i, \check\kappa_i>0$
and $\|u\|_{(\ell^2)^2}^2=\|u_1\|_{\ell^2}^2+\|u_2\|_{\ell^2}^2$  for $u=(u_1,u_2)^T$,
$\|\check u\|_{(\ell^2)^4}^2=\|u\|_{(\ell^2)^2}^2+\|\tu\|_{(\ell^2)^2}^2$.

We consider the error 
$\ve^\beta \wt R_\ve :=\wt u-\wt U_{\ve}^{A,2}
=\big( u - U_{\ve}^{A,2},\dot u - \dot U_{\ve}^{A,2}\big)^T$
between an original solution $u$ of \eqref{dichainS} and 
the improved approximation $U_{\ve}^{A,2}$ given by \eqref{imprappr} 
with the $A_{1,n}$, $A_{2,\iota}$ determined by the formal derivation.
Since for this $U_\ve^{A,2}$ we have 
$\calL U_\ve^{A,2}+\calM\big(U_\ve^{A,2}\big)-\ddot U_\ve^{A,2}=\res\big(U_\ve^{A,2}\big)$, 
inserting $\wt R_\ve$ into \eqref{dichainStilde}
we obtain the differential equation
\begin{align*}
\dot{\wt R}_\ve
&=
\wt \calL\wt R_\ve
+\ve^{-\beta}\begin{pmatrix} 0 \\ 
\calM\big(U_\ve^{A,2}+\ve^\beta R_\ve\big)-\calM\big(U_\ve^{A,2}\big)
+\res\big(U_\ve^{A,2}\big) \end{pmatrix}.
\end{align*}
Taking the energy norm of its integral formulation, assuming   
$\big\|\wt R_\ve(0)\big\|_Y\le d$, and applying Lemma \ref{lemmares} c), 
we get
\begin{multline}\label{varconst}
\big\|\wt R_\ve(t)\big\|_Y
\le  d + \ve^{3/2-\beta} \tau_0 c_r
+\ve^{-\beta}\int_0^t
\big\|\calM\big(U_\ve^{A,2}{+}\ve^\beta R_\ve\big)
-\calM\big(U_\ve^{A,2}\big)\big\|_M \,d s
\end{multline}
for $\ve\in(0,\ve_0]$, $t\in[0,\tau_0/\ve]$.
From \eqref{dichainS} and \eqref{TEVW} we get by the mean value theorem
\begin{align}\label{estNgeneral}
\|\calM(u)-\calM(\tu)\|_M
\le  c_\calM(\|u\|_{\ell^\infty}+\|\tu\|_{\ell^\infty}) \|u-\tu\|_M
\quad\text{for $\|u\|_{\ell^\infty}, \|\tu\|_{\ell^\infty}\le c_0$}
\end{align}
with 
$\|u\|_{\ell^\infty}=\max\{\|u_1\|_{\ell^\infty},\|u_2\|_{\ell^\infty}\}$ 
for $u=(u_1,u_2)^T\in(\ell^2)^2$, 
and $c_\calM$ depending only on $V_i,W_i$ and $c_0>0$.

We set 
$D:=\big(d+\ve_0^{3/2-\beta} \tau_0 c_r \big) e^{\tau_0 3c_\calM c_A\hat\kappa_2/\kappa_1}$
with $c_A$ from Lemma \ref{lemmares} a)
and $\ve_0>0$ such that $\ve_0^\beta D/\kappa_1\le \ve_0 c_A\le c_0/2$.
Since $\big\|\wt R_\ve(0)\big\|_Y\le d<D$ and $\big\|\wt R_\ve(t)\big\|_Y$ is continuous,
there exists for every $\ve\in(0,\ve_0]$ a $t_D^\ve>0$, such that
$\big\|\wt R_\ve(t)\big\|_Y\le D$ for $t\in[0,t_D^\ve]$.
Then, 
for $\ve\in(0,\ve_0]$ and $t\in[0,\min\{\tau_0/\ve,t_D^\ve\}]$
\eqref{estNgeneral} gives
\begin{equation*}
\big\|\calM\big(U_\ve^{A,2}{+}\ve^\beta R_\ve\big)-\calM\big(U_\ve^{A,2}\big)\big\|_M
\le \ve^{\beta+1} (3 c_\calM c_A\hat\kappa_2/\kappa_1) \big\|\wt R_\ve\big\|_Y.
\end{equation*}
Inserting this estimate into \eqref{varconst} and applying Gronwall's Lemma, we get
\begin{align*}
\big\|\wt R_\ve(t)\big\|_Y
\le \Big(d+\ve_0^{3/2-\beta}\, \tau_0\, c_r\Big) e^{\ve t 3 c_\calM c_A\hat\kappa_2/\kappa_1}
\le D
\quad\text{for $\ve\in(0,\ve_0]$, 
$t\in[0,\tau_0/\ve]$.}
\end{align*}

Finally, 
with 
$d:=\check\kappa_2 c + \ve_0^{3/2-\beta}c_I$ and $C:=(D+\ve_0^{3/2-\beta}c_I)/\check\kappa_1$
we obtain 
from Lemma \ref{lemmares} b) and the equivalence of $\|\cdot\|_{(\ell^2)^4}$ and $\|\cdot\|_Y$
the assertion of the theorem. \end{proof}

\begin{lemma}\label{lemmares}
For $U_\ve^{A,2}$ given by \eqref{imprappr} 
with $A_{1,n}$, $A_{2,\iota}$ as determined in Section \ref{formalderivation}
and initial data $A_{1,n}^{(i)}(0,\cdot)\in H^4(\R;\C)$,
there exist $\tau_0, \ve_0, c_A, c_I, c_r>0$ 
such that for all $\ve\in[0,\ve_0]$, $\ve t\in[0,\tau_0]$
\begin{align*}
\text{a)}\ \big\|U_\ve^{A,2}\big\|_{\ell^\infty} \le\ve c_A,
\quad
\text{b)}\ \big\|\wt U_\ve^{A,2}{-}\wt U_\ve^{A,1}\big\|_Y \le\ve^{3/2}c_I,
\quad
\text{c)}\ \big\|\res\big(U_\ve^{A,2}\big)\big\|_M \le \ve^{5/2} c_r. 
\end{align*}
\end{lemma}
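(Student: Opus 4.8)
My plan is to reduce all three estimates to one scaling principle relating discrete $\ell^2$-sums of sampled macroscopic profiles to continuous $L^2$-norms, after first securing uniform Sobolev bounds on the amplitudes. So I would begin with the regularity and well-posedness of the profiles on a macroscopic interval $[0,\tau_0]$. The first-order equations \eqref{noninterA}, \eqref{36}, \eqref{36a} are homogeneous linear transport equations, solved explicitly by $A_{1,n}^{(i)}(\tau,y)=A_{1,n}^{(i)}(0,y+\om_\pm^\prime(\vth_n)\tau)$; they preserve $H^4(\R)$ for all $\tau$, with $\sup_\tau\|A_{1,n}^{(i)}(\tau,\cdot)\|_{H^4}=\|A_{1,n}^{(i)}(0,\cdot)\|_{H^4}$. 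In the resonant case \eqref{interA}/\eqref{38} is a semilinear transport system with quadratic right-hand side, for which I would run a standard $H^4$ energy/contraction argument (using that $H^4(\R)$ is a Banach algebra) to obtain $\tau_0>0$ and a uniform bound $\sup_{[0,\tau_0]}\|A_{1,n}^{(i)}(\tau,\cdot)\|_{H^4}\le C$; this is the well-posedness alluded to in the statement. The second-order amplitudes then follow algebraically: by the non-resonance hypotheses \eqref{regularity} the matrices $H(2\om_n,2\vth_n)$, $H(\om_1\pm\om_2,\vth_1\pm\vth_2)$ and $H(0,0)$ are invertible, so each $A_{2,\iota}$ is $H(\cdot,\cdot)^{-1}$ applied to the known $K_\iota$ (and, for $A_{2,n}$, to first $y$- and $\tau$-derivatives of $A_{1,n}$); since these are products and derivatives of $A_{1,n}\in H^4$, one obtains $A_{2,\iota}(\tau,\cdot)\in H^3(\R)$ with a uniform bound on $[0,\tau_0]$.

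The single analytic ingredient is the comparison, for $f\in H^1(\R)$ and any shift $\theta\in\R$,
\[
\ve\sum_{j\in\Z}|f(\ve j+\theta)|^2 \le \|f\|_{L^2(\R)}^2 + C\ve\,\|f\|_{L^2(\R)}\|f^\prime\|_{L^2(\R)},
\]
which follows from the fundamental theorem of calculus on each cell $[\ve j,\ve(j{+}1)]$ and gives $\|f(\ve\cdot+\theta)\|_{\ell^2}\le C\ve^{-1/2}\|f\|_{H^1}$ uniformly in $\theta$ for $\ve\le1$; the $\ell^\infty$ analogue is simply $\|f(\ve\cdot+\theta)\|_{\ell^\infty}\le\|f\|_{L^\infty}\le C\|f\|_{H^1}$ by Sobolev embedding. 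Since every carrier $\boe_n$ obeys $|\boe_n|=1$ and $\dot\boe_n=\i\om_n\boe_n$, the exponentials contribute only bounded factors. For a), $U_\ve^{A,2}$ is $\ve\sum_{n}A_{1,n}\boe_n+\ve^2(\cdots)+\cc$, so the $\ell^\infty$ bound gives $\|U_\ve^{A,2}\|_{\ell^\infty}\le\ve\big(2\sum_n\|A_{1,n}\|_{L^\infty}+\ve(\cdots)\big)\le\ve c_A$ for $\ve\le\ve_0$. For b), the difference $\wt U_\ve^{A,2}-\wt U_\ve^{A,1}$ consists of the $\ve^2$-terms and their time-derivatives; as $\partial_t\big(\ve^2 A_{2,\iota}(\ve t,\ve j)\boe\big)=\ve^2(\i\om A_{2,\iota}+\ve\partial_\tau A_{2,\iota})\boe$ still has pointwise size $O(\ve^2)$, the scaling estimate yields $\ell^2$-size $\ve^2\cdot C\ve^{-1/2}=C\ve^{3/2}$ per term, and summing the finitely many terms with the equivalence of $\|\cdot\|_Y$ and $\|\cdot\|_{(\ell^2)^4}$ gives $\le\ve^{3/2}c_I$.

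For c), the formal expansion (with the explicit coefficients in the Appendix) exhibits $\res\big(U_\ve^{A,2}\big)$ as a finite sum of terms $\ve^3 B_k(\ve t,\ve j)\boe_k(t,j)$, each $B_k$ a product/derivative combination of $A_{1,n}$, $A_{2,\iota}$ and the Taylor remainders $\partial_y^2 A_{1,n,\xi_1\pm}$, $\partial_y A_{2,\iota,\xi_2\pm}$. By Step 1 these lie in $H^1(\R)$ with uniform bounds on $[0,\tau_0]$ (note $\partial_y^2 A_{1,n}\in H^2\subset H^1$, and $H^2$ is an algebra), so each term has $\ell^2$-size $\le\ve^3\cdot C\ve^{-1/2}=C\ve^{5/2}$, and the equivalence of $\|\cdot\|_M$ with $\|\cdot\|_{(\ell^2)^2}$ gives $\|\res\big(U_\ve^{A,2}\big)\|_M\le\ve^{5/2}c_r$.

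The genuinely substantive step is the uniform $H^4$ well-posedness of the coupled resonant system \eqref{interA}/\eqref{38} on $[0,\tau_0]$ together with the propagation of regularity; everything else is bookkeeping governed by the scaling estimate. Within c) the only delicate point is confirming that the cubic-and-higher part of the nonlinearity genuinely contributes at order $\ve^3$: I would handle this by invoking estimate a) to confine all arguments of $\widetilde V_i^\prime,\widetilde W_i^\prime$ to a fixed bounded interval, on which the $C^4$-regularity of \eqref{TEVW} (with $\widetilde V_i^\prime(x),\widetilde W_i^\prime(x)=O(|x|^3)$) supplies the pointwise $O(\ve^3)$ bound with an $H^1$-coefficient, so that the scaling estimate again applies.
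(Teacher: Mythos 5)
Your proposal is correct and takes essentially the same route as the paper: the identical scaling estimate $\big\|\varphi\big(\ve(\cdot+\xi)\big)\big\|_{\ell^2}\le c\,\ve^{-1/2}\|\varphi\|_{H^1(\R;\C)}$ (which the paper quotes from \cite{Gia10}) combined with uniform Sobolev bounds for the amplitudes on $[0,\tau_0]$, the latter obtained in the paper by citing semigroup theory \cite{Paz83} for the semilinear transport systems rather than by your explicit-solution/energy-contraction argument. The one point you should spell out is that the residual terms $T_i$, $S_i$ contain $\partial_\tau A_{2,\iota}$ and $\partial_\tau^2 A_{2,\iota}$, hence third-order mixed derivatives $\partial_\tau^3 A_{1,n}$, $\partial_\tau^2\partial_y A_{1,n}$, which must be traded for spatial derivatives through the transport equations --- exactly the regularity $(p,q)=(3,0),(2,1)$ listed in the paper's proof and the reason $H^4$ initial data is precisely what is needed; your phrase ``propagation of regularity'' covers this, but the bookkeeping is where the exponent $m=4$ is actually consumed.
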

\begin{proof}
Inserting into \eqref{appr}, \eqref{imprappr} and $\res\big(U_\ve^{A,2}\big)$ (see Appendix) 
$A_{1,n}^{(2)}=-\rho_n A_{1,n}^{(1)}$
(with $\rho_n=0$ for 
$\vth_n=\pm\pi$, $\om_n=\sqrt{c_1}$),
$A_{2,n}^{(1)}\equiv0$ 
[or $A_{1,n}^{(1)}\equiv A_{2,n}^{(2)}\equiv  0$ for 
$\vth_n=\pm\pi$, $\om_n=\sqrt{c_2}$],
and the $A_{2,n}^{(2)}$ [or  $A_{2,n}^{(1)}$, respectively], $A_{2,\iota}$, $\iota\in I$, specified in Section \ref{formalderivation},
recalling $|\wt V_i^\prime(x)|,|\wt W_i^\prime(x)|=O(|x|^3)_{x\to0}$
and the equivalence of 
$\|{\cdot}\|_E$, $\|{\cdot}\|_M$, $\|{\cdot}\|_{(\ell^2)^2}$, 
and 
using the corollary 
of  Sobolev's embedding theorem (cf., e.g., \cite[Lemma 3.1]{Gia10})
\begin{equation*}
\big\|\varphi\big(\ve(\cdot+\xi)\big)\big\|_{\ell^2} 
\le c \ve^{-1/2} \|\varphi\|_{H^1(\R;\C)},
\quad\xi:\Z\to[-1,1],\ \ve\ \in(0,\ve_0],
\end{equation*}
and 
$A,B\in C([0,\tau_0]; H^1(\R;\C ))\ \Rightarrow\ AB\in C([0,\tau_0]; H^1(\R;\C ))$, 
we obtain 
that the above estimates are satisfied, provided
\begin{align*}
\partial_\tau^p \partial_y^q A_{1,n}^{(1[2,\text{resp.}])} 
\in C([0,\tau_0]; H^1(\R;\C ))
\quad\text{for $|(p,q)|\le 2$, $(p,q)=(3,0),(2,1)$.}
\end{align*}
Since the macroscopic equations for $A_{1,n}^{(1[2])}$, $n=1,2$, are 
semilinear autonomous transport systems with smooth nonlinearities, 
standard results of semigroup theory (cf., e.g., \cite[Th.\ 6.1.7]{Paz83}) yield
that for initial data $A_{1,n}^{(1[2])}(0,\cdot)\in H^m(\R;\C)$, $m\ge1$, 
there exist unique classical solutions 
with $\partial_\tau^p \partial_y^q A_{1,n}^{(1[2])}\in C([0,\tau_0]; H^1(\R;\C))$
for $|(p,q)|\le m-1$
up to some $\tau_0\in(0,\infty]$. 
Hence, for $m=4$ 
we obtain the statement of the lemma.
\end{proof}

\section{Appendix}
For completeness we present here the 
$K_\iota=\big(K_\iota^{(1)},K_\iota^{(2)}\big)^T$, 
$\iota\in I$, and 
$\res\big(U_\ve^{A,2}\big)=\big(\res_1\big(U_\ve^{A,2}\big), \res_2\big(U_\ve^{A,2}\big)\big)^T$
derived in Section \ref{formalderivation}, 
with $i=1,2$ (where $i+1=1$ for $i=2$) and with the upper sign corresponding to $i=1$.
\begin{align*}
K_{(n,n)}^{(i)}:=&
\pm 
v_{i,2}\Big(\big(A_{1,n}^{(i+1)}\big)^2(e^{\pm\i 2\vth_n}{-}1)
-2 A_{1,n}^{(1)}A_{1,n}^{(2)}(e^{\pm\i\vth_n}{-}1)\Big)
-w_{i,2} \big(A^{(i)}_{1,n}\big)^2,
\displaybreak[0]\\
K_{(1,2)}^{(i)}:=&
\pm 
2 v_{i,2}
\Big(
A_{1,1}^{(i+1)}A_{1,2}^{(i+1)}(e^{\pm\i(\vth_1+\vth_2)}{-}1)
-A_{1,1}^{(i)}A_{1,2}^{(i+1)}(e^{\pm\i\vth_2}{-}1)
\displaybreak[0]\\&
-A_{1,1}^{(i+1)}A_{1,2}^{(i)}(e^{\pm\i\vth_1}{-}1)
\Big)
- 2 w_{i,2} A^{(i)}_{1,1} A^{(i)}_{1,2},
\displaybreak[0]\\
K_{(1,-2)}^{(i)}:=&
\pm 
 2 v_{i,2}
\Big(
 A_{1,1}^{(i+1)}\bar A_{1,2}^{(i+1)}(e^{\pm\i(\vth_1-\vth_2)}{-}1)
-A_{1,1}^{(i)}\bar A_{1,2}^{(i+1)}(e^{\mp\i\vth_2}{-}1)
\displaybreak[0]\\&
-A_{1,1}^{(i+1)}\bar A_{1,2}^{(i)}(e^{\pm\i\vth_1}{-}1)
\Big)
-2 w_{i,2} A^{(i)}_{1,1}\bar A^{(i)}_{1,2},
\displaybreak[0]\\
K_{(1,-1)}^{(i)}:=&
\sum_{n=1}^2 \Big(
\mp
2 v_{i,2} \bar A_{1,n}^{(i)} A_{1,n}^{(i+1)}(e^{\pm\i\vth_n}{-}1) 
-w_{i,2} \big|A^{(i)}_{1,n}\big|^2
\Big);
\end{align*}
\begin{align*} 
&
\res_i\big(U_\ve^{A,2}\big):=
\ve^3\Big( -T_i \pm v_{i,1}F_i \pm 2 v_{i,2}\big(D_i E_i{-}(A_1{-}A_2)(B_1{-}B_2)\big)
-2 w_{i,2} A_i B_i \Big)
\displaybreak[0]\\&\quad
+\ve^4\Big( -S_i \pm v_{i,2}\big(E_i^2{+}2 D_i F_i{-}(B_1{-}B_2)^2\big) -w_{i,2}B_i^2\Big)
\pm \ve^5 v_{i,2} E_i F_i \pm \ve^6 v_{i,2} F_i^2
\displaybreak[0]\\&\quad
\pm\widetilde V_i^\prime \big(\ve D_i{+}\ve^2 E_i{+}\ve^3 F_i\big)
\mp \widetilde V_i^\prime \big(
\ve (A_1{-}A_2){+}\ve^2 (B_1{-}B_2)
\big)
-\widetilde W_i^\prime 
(\ve A_i{+}\ve^2 B_i),
\end{align*}
\begin{align*}
T_i:= &
\sum_{n=1}^2 
\Big(\big(\partial_\tau^2 A^{(i)}_{1,n}{+}\partial_\tau A^{(i)}_{2,n} 2\i\om_n\big) \boe_n 
+\partial_\tau A^{(i)}_{2,(n,n)} 4\i\om_n \boe_n^2\Big) 
\displaybreak[0]\\&
+\partial_\tau A^{(i)}_{2,(1,2)}2\i(\om_1{+}\om_2 )\boe_1\boe_2 
+\partial_\tau A^{(i)}_{2,(1,-2)}2\i(\om_1{-}\om_2)\boe_1\boe_{-2}
+\cc,
\displaybreak[0]\\
S_i:= &
\sum_{n=1}^2 \Big(\partial_\tau^2 A^{(i)}_{2,n}\boe_n 
+\partial_\tau^2 A^{(i)}_{2,(n,n)}\boe_n^2 \Big)
+\partial_\tau^2 A^{(i)}_{2,(1,2)}\boe_1\boe_2 
+\partial_\tau^2 A^{(i)}_{2,(1,-2)}\boe_1\boe_{-2} 
\displaybreak[0]\\&
+\frac12\partial_\tau^2 A^{(i)}_{2,(1,-1)}
+\cc,
\displaybreak[0]\\
A_i:= &
\sum_{n=1}^2 A_{1,n}^{(i)} \boe_n+\cc,
\qquad 
D_i:= \pm\sum_{n=1}^2\big(A_{1,n}^{(i+1)}e^{\pm\i\vth_n}{-}A_{1,n}^{(i)}\big)\boe_n+\cc,
\displaybreak[0]\\
B_i:= & \sum_{n=1}^2 \Big(A_{2,n}^{(i)}\boe_n+ A_{2,(n,n)}^{(i)}\boe_n^2\Big)
+A_{2,(1,2)}^{(i)}\boe_1\boe_2+A_{2,(1,-2)}^{(i)}\boe_1\boe_{-2}+\frac12 A_{2,(1,-1)}^{(i)}
\displaybreak[0]\\&
+\cc,
\displaybreak[0]\\
E_i:= &
\sum_{n=1}^2 \Big(
\big(\partial_y A_{1,n}^{(i+1)}e^{\pm\i\vth_n}
{\pm} A_{2,n}^{(i+1)}e^{\pm\i\vth_n}{\mp} A^{(i)}_{2,n}\big)\boe_n
\pm \big(A_{2,(n,n)}^{(i+1)}e^{\pm\i 2\vth_n}{-}A^{(i)}_{2,(n,n)}\big)\boe_n^2\Big)
\displaybreak[0]\\&
\pm \big(A_{2,(1,2)}^{(i+1)}e^{\pm\i(\vth_1+\vth_2)}{-}A^{(i)}_{2,(1,2)}\big)\boe_1\boe_2
\pm \big(A_{2,(1,-2)}^{(i+1)}e^{\pm \i(\vth_1-\vth_2)}{-}A^{(i)}_{2,(1,-2)}\big)\boe_1\boe_{-2}
\displaybreak[0]\\&
+\frac12 \big(A^{(2)}_{2,(1,-1)}{-}A^{(1)}_{2,(1,-1)}\big)
+\cc,
\displaybreak[0]\\
F_i:= &
\sum_{n=1}^2 \Big(
\big(\pm\frac12\partial_y^2 A_{1,n,\xi_1\pm}^{(i+1)}
{+}\partial_y A_{2,n,\xi_2\pm}^{(i+1)}\big)e^{\pm\i\vth_n} \boe_n
+\partial_y A_{2,(n,n),\xi_2\pm}^{(i+1)}e^{\pm\i 2\vth_n}\boe_n^2\Big)
\displaybreak[0]\\&
+\partial_y A_{2,(1,2),\xi_2\pm}^{(i+1)}e^{\pm\i (\vth_1+\vth_2 )}\boe_1\boe_2
+\partial_y A_{2,(1,-2),\xi_2\pm}^{(i+1)}e^{\pm\i (\vth_1-\vth_2 )}\boe_1\boe_{-2}
\displaybreak[0]\\&
+\frac12\partial_y A_{2,(1,-1),\xi_2\pm}^{(i+1)}
+\cc
\end{align*}

\end{document}